\newcommand{\squishlist}{
   \begin{list}{$\bullet$}
    { \setlength{\itemsep}{0pt}      \setlength{\parsep}{0pt}
      \setlength{\topsep}{0pt}       \setlength{\partopsep}{0pt}
      \setlength{\leftmargin}{1em} \setlength{\labelwidth}{1em}
      \setlength{\labelsep}{0.5em} } }
\newcommand{\squishlisttwo}{
   \begin{list}{$\bullet$}
    { \setlength{\itemsep}{0pt}    \setlength{\parsep}{0pt}
      \setlength{\topsep}{0pt}     \setlength{\partopsep}{0pt}
      \setlength{\leftmargin}{2em} \setlength{\labelwidth}{1.5em}
      \setlength{\labelsep}{0.5em} } }
\newcommand{\squishlistend}{
    \end{list}  }
\newtheorem{theorem}{Theorem}
\newcommand*{\QEDA}{\hfill\ensuremath{\square}}
\DeclareMathOperator*{\concat}{concat}
\DeclareMathOperator*{\argmax}{argmax}
\DeclareMathOperator*{\cosine}{cosine}
\theoremstyle{definition}
\newtheorem{definition}{Definition}
\def\dd{{\mathbb{D}}}
\def\xg{{\tilde{x}}}
\def\xgi{{\tilde{x}_i}}
\def\gru{{\text{GRU}}}
\def\vp{v^{\prime}}
\def\xgdpi{{\varepsilon_{\epsilon, \rho_i}(\tilde{x}_i)}}
\def\xgdp{{\tilde{x}_{dp}}}
\title{ER-AE: Differentially Private Text Generation for Authorship Anonymization}
\author{Haohan Bo \\
  McGill University, Canada \\
  \texttt{haohan.bo@mail.mcgill.ca} \\\And
  Steven H. H. Ding \\
  Queen's University, Canada \\
  \texttt{ding@cs.queensu.ca} \\\AND
  Benjamin C. M. Fung \\
  McGill University, Canada \\
  \texttt{ben.fung@mcgill.ca} \\\And
  Farkhund Iqbal \\
  Zayed University, UAE \\
  \texttt{farkhund.iqbal@zu.ac.ae} \\
}
\date{}
\begin{document}
\maketitle
\begin{abstract}
Most of privacy protection studies for textual data focus on removing explicit sensitive identifiers. However, personal writing style, as a strong indicator of the authorship, is often neglected. Recent studies, such as SynTF, have shown promising results on privacy-preserving text mining. However, their anonymization algorithm can only output numeric term vectors which are difficult for the recipients to interpret. We propose a novel text generation model with a two-set exponential mechanism for authorship anonymization. By augmenting the semantic information through a REINFORCE training reward function, the model can generate differentially private text that has a close semantic and similar grammatical structure to the original text while removing personal traits of the writing style. It does not assume any conditioned labels or paralleled text data for training. We evaluate the performance of the proposed model on the real-life peer reviews dataset and the Yelp review dataset. The result suggests that our model outperforms the state-of-the-art on semantic preservation, authorship obfuscation, and stylometric transformation. 
\end{abstract}

\section{Introduction}
Privacy has become a vital issue in online data gathering and public data release. 
Various machine learning models and privacy preservation algorithms have been studied for relational data~\cite{johnson2018towards},
network graph data~\cite{chen2014correlated}, 
and transactional data \cite{li2012privbasis}. 
Some of them have been successfully adopted in real-life applications such as telemetry collection~\cite{cortes2016differential}. 
However, the studies on privacy protection for textual data are still preliminary. 
Most related works only focus on replacing the sensitive key phrases in the text
~\cite{vasudevan2014review}
without considering the author's writing style, which is indeed a strong indicator of a person's identity. 
Even though some textual data, such as double-blind academic reviews, is released anonymously, the adversaries may recover the author's identity using the personal traits in writing. 
Stylometric techniques~\cite{koppel2011authorship} can identify an author of the text from 10,000 candidates. 
They are effective across online posts, articles, emails, and reviews~\cite{ding2015visualizable,ding2017learning}.
Nevertheless, traditional text sanitization methods~\cite{narayanan2008robust} focus on anonymizing the contents, such as patient information, instead of the writing style, so they are ineffective against writing style analysis. 
The original author can be easily re-identified even if protected by these traditional approaches~\cite{IHFD08di,iqbal2010mining,iqbal2013unified,SIF15diin}.

Only a few recent studies focus on authorship anonymization, aiming to hide the personal traits of writing style in the given textual data. \emph{Anonymouth}~\cite{mcdonald2012use} is a semi-automatic framework that offers suggestions to users to change their writing style. Yet, this framework is not practical since it requires two datasets as a reference to compare the change in writing style. Also, the user has to make all the final modification decisions. \emph{SynTF}~\cite{weggenmann2018syntf} represents a line of research that protects the privacy of the numeric vector representation of textual data. It adopts the exponential mechanism for a privacy guarantee, but the output is only an opaque term frequency vector, not an interpretable text in natural language. Furthermore, its token substitution approach does not consider the grammatical correctness and semantic.

Style transfer is another line of research that tries to generate text with controllable attributes
~\cite{shen2017style,hu2017toward,sennrich2016controlling}.
Representative models
~\cite{hu2017toward}
can control the sentiment and tense of the generated text. However, they do not modify the personal traits in writing. Their applications on sentiment and word-reordering correspond to the content of the text more than the writing style. We argue that their definition of styles, such as sentiment or tense, is different from the personal linguistic writing characteristics that raise privacy concern. 
\emph{A4NT}~\cite{shetty20184} is a generative neural network that sanitizes the writing style of the input text. However, it requires text samples to be labeled with known author identities. It is not applicable to many textual data publishing scenarios.
Additionally, according to the samples provided in the paper, it has difficulties keeping the same semantic meaning between the original and the generated text. 
Without using any privacy model, A4NT does not provide any privacy guarantee. 

To address the aforementioned issues, we propose an \emph{Embedding Reward Auto-Encoder (ER-AE)} to generate differentially private text. 
Relying on differential privacy, it protects the author's identity through text indistinguishability without assuming any specific labels, any parallel data or any assumption on the attacker. It guards the privacy of the data against the worst information disclosure scenario. 
ER-AE receives the original text as input and generates a new text using the two-set exponential mechanism. 
We propose a \emph{REINFORCE}~\cite{sutton2000policy} embedding reward function to augment the semantic information during the text generation process. The model can keep the generated text a close semantic and sentiment similarity to the original while providing a guarantee that one can hardly recover the original author's identity. 
Unlike the aforementioned authorship anonymization works, ER-AE produces human-friendly text in natural language. Our key contributions are summarized as follows:
\squishlist
    \item 
    The first differentially private authorship anonymization model that can generate human-friendly text in natural language, instead of a numeric vector.
    \item 
    A novel two-set exponential mechanism to overcome the large output space issue while producing meaningful results.
    
    \item 
    A novel combination of a differential privacy mechanism with a sequential text generator, providing a privacy guarantee through a sampling process. 

    \item
    A new REINFORCE reward function that can augment the semantic information through external knowledge, enabling better preservation of the semantic similarity in the data synthesis process. 

    \item Comprehensive evaluations on two real-life datasets, namely \textit{NeurIPS \& ICLR peer reviews} and \textit{Yelp product reviews}, show that ER-AE is effective in obfuscating the writing style, anonymizing the authorship, and preserving the semantics of the original text. 
\squishlistend
All the source code and data are publicly accessible for reproducibility and transferability.\footnote{\href{https://github.com/McGill-DMaS/Authorship-Anonymization}{https://github.com/McGill-DMaS/Authorship-Anonymization}}


\section{Related Work}
\label{sec:relate}

\textbf{Differential Privacy.}
Recently, differential privacy has received a lot of attention in the machine learning community.
The deep private auto-encoder~\cite{phan2016differential} is designed to preserve the training data privacy. Their purpose is to guarantee that publishing the trained model does not reveal the privacy of individual records. Our purpose is different. We publish the differentially private data generated by the model, rather than the model itself. Most existing models for differentially private data release, such as Chen et al.~\shortcite{chen2014correlated}
, focus on different types of data rather than text. One recent work~\cite{weggenmann2018syntf} aims to protect privacy in text data using the exponential mechanism. However, it releases the term frequency vectors instead of a readable text. This approach limits the utility of published data to only the applications that assume term frequency as features. In contrast, our goal is to generate differentially private text in a natural language without compromising individual privacy.

\textbf{Writing Style Transfer.}
Studies on writing style transferal try to change the writing style revealed from the text according to a given author. Shetty et al.~\shortcite{shetty20184} design a GAN to transfer Obama's text to Trump's style. 
A sequence to sequence (seq2seq) model is proposed by Jhamtani et al.~\shortcite{jhamtani2017shakespearizing} to transfer modern English into Shakespearean English. 
Shetty et al.\shortcite{shen2017style}~design a model with a cross-alignment method to control the text sentiment while preserving semantic. 
These models can also be applied to writing style anonymization.
However, these studies require the data to be labeled with authorship identity. They assume a number of known authors. In contrast, ours does not assume any label information.


\textbf{Writing Style Obfuscation.} Writing style obfuscation studies try to hide the identity of an author. 
Anonymouth~\cite{mcdonald2012use} is a tool that utilizes \emph{JStylo} to generate writing attributes. It gives users suggestions on which way they can anonymize their text according to two reference datasets. 
\cite{kacmarcik2006obfuscating} also propose a similar architecture to anonymize text. However, instead of directly changing the text, they all work on the term frequency vector, whose real-life utility is limited. Compared with semi-automatic methods that require users to make a decision, our approach provides an end-to-end solution that directly learns from data.

\section{Preliminaries and Problem Definition}

Adjacency is a key notion in differential privacy. One of the commonly used adjacency definitions is that two datasets $D_{1}$ and $D_{2}$ are adjacent if $D_{2}$ can be obtained by modifying one record in $D_{1}$ \cite{dwork2010boosting}. Differential privacy~\cite{dwork2006calibrating} is a framework that provides a rigorous privacy guarantee on a dataset.  It demands \emph{inherent randomness} of a sanitization algorithm or generation function:


\begin{definition} \textbf{Differential Privacy.} Two datasets are considered as adjacent if there is only one single element is different. Let privacy buget $\epsilon > 0$, a randomized algorithm $\mathcal{A}: D^{n} \xrightarrow{} Z $, and the image of $\mathcal{A}$: $im(\mathcal{A})$. The algorithm $\mathcal{A}$ is said to preserve $\epsilon$-differential privacy if for any two  adjacent datasets $D_{1}$, $D_{2}$ $\in D^{n}$, and for any possible set of output $Z \in im(\mathcal{A})$:
\label{def:def}
\begin{equation*}
\label{dp}
Pr\left[\mathcal{A}\left(D_{1}\right) \in Z\right] \leq e^{\epsilon} \cdot Pr\left[\mathcal{A}\left(D_{2}\right) \in Z\right]\tag*{\QEDA}
\end{equation*}
\end{definition} 
It guarantees that the result from a given algorithm $\mathcal{A}$ is not sensitive to
a change of any individual record in $D$. 
$\epsilon$ denotes the privacy budget, the allowed degree of sensitivity. 
A large $\epsilon$ implies a higher risk to privacy. However, $\epsilon$ is a relative value that implies different degrees of risk given different problems~\cite{weggenmann2018syntf}. 
Some studies~\cite{sala2011sharing} use a large $\epsilon$, while the others~\cite{chen2014correlated} use a smaller value. 

\textbf{Adversary Scenario.} Generally in an authorship identification problem, one assumes that the attacker holds an anonymous text authored by one of the suspects from the dataset. The attacker aims to infer the true author of the anonymous text based on a set of reference texts from each suspect. However, this scenario assumes certain information on the applicable dataset, such as author labels and the number of reference text samples. Therefore, following~\cite{weggenmann2018syntf}, we define that any two pieces of text as adjacent datasets.

\textbf{Adjacency.} Any two pieces of text can be considered \textit{adjacent} in the strictest scenario that datasets $D_{1}$ and $D_{2}$ both have only one record, and $D_{2}$ can be obtained by editing one record in $D_{1}$ following~Definition~\ref{def:def}. 
With differential privacy, we can have text indistinguishability: one cannot distinguish the identity of any text to another. In our case, the identity of a text corresponds to the author who wrote the text. Along with this, the attacker would fail in the original authorship identification scenario since the anonymous text is indistinguishable from the rest of the dataset.

Our definition follows Weggenmann  and  Kerschbaum~\shortcite{weggenmann2018syntf}'s idea, leading to the strictest and most conservative definition of adjacency.
\begin{definition} \textbf{Differentially Private Text Generation.}
\label{def:dptg}
Let $\dd$ denote a dataset that contains a set of texts where $x \in \dd$ is one of them. $|x|$, the length of the text, is bound by $l$. Given $\dd$ with a privacy budget $\epsilon$, for each $x$ the model generates another text $\xgdp$ that satisfies $\epsilon l$-differential privacy. $\QEDA$

Following the above definitions, any two datasets that contain only one record are probabilistically indistinguishable w.r.t. a privacy budget $\epsilon$. It directly protects the identity of an individual record, disregarding whether some of the records belong to the same author or not. It assumes that every record is authored by a different author, which is the strictest situation.
Technically, the proposed text generation approach protects the writing style by reorganizing the text, replacing tokens with different spelling, removing the lexical, syntactical and idiosyncratic features of the given text. The above definition is based on SynTF~\shortcite{weggenmann2018syntf}, but our target is readable text rather than numeric vectors, which is more challenging.
 \end{definition}

\begin{figure*}[t]

\centering

\includegraphics[width=1 \textwidth, height=0.31 \textwidth]{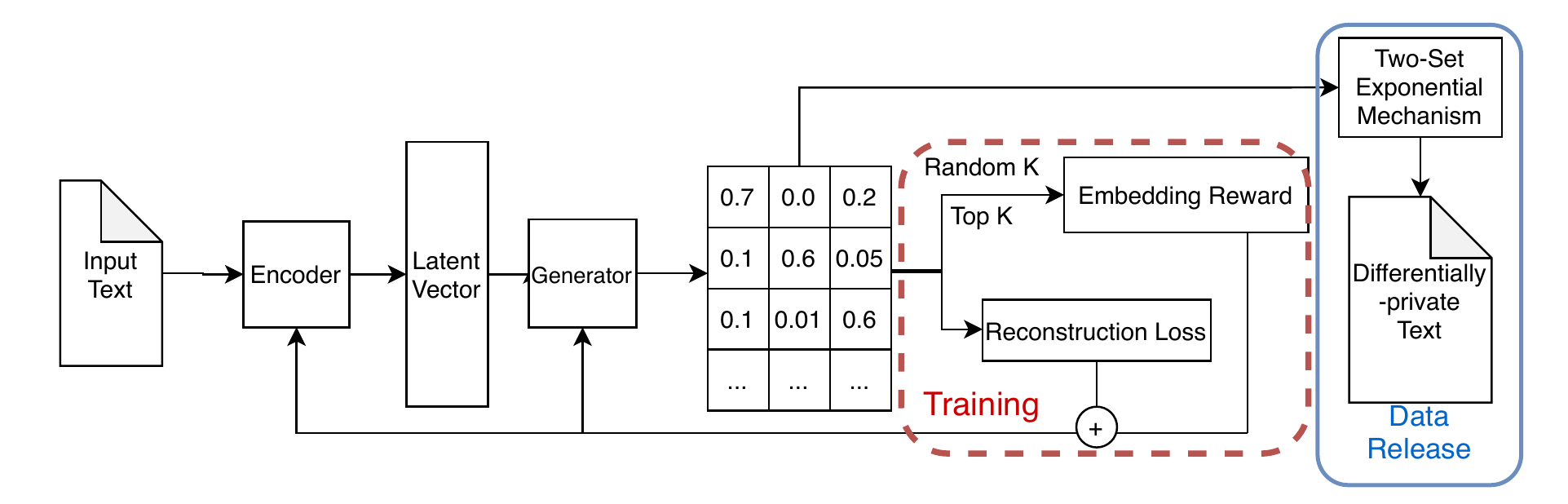}
\caption{Overall architecture of ER-AE.}
\label{fig:model_fig}
\end{figure*}

\section{ER-AE for Differentially Private Text Generation}
\label{Method}
Figure~\ref{fig:model_fig} depicts the overall architecture of our proposed ER-AE model, which consists of an encoder and a generator. The encoder receives a sequence of tokens as input and generates a latent vector to represent the semantic features. The generator, which is incorporated with the two-set exponential mechanism, can produce differentially private text according to the latent vector. ER-AE is trained by combining a reconstruction loss function and a novel embedding loss function.

\begin{algorithm}
\caption{Generation Procedure of ER-AE}
\label{alg:er-ae}
\small
\begin{algorithmic} 
 \STATE \textbf{Input}: Text: $x$, Parameters: $\theta$, Encoder: $E_{\theta}()$, Generator: $G_{\theta}()$, Privacy budget: $\epsilon$.
 \STATE  Produce the latent vector: $E_{\theta}(x)$.
 \STATE    Get probabilities of new tokens: $Pr[\xg] \leftarrow G_{\theta}(E_{\theta}(x))$.
  \FOR{$i \leftarrow 1$ to length of $x$}
 \STATE  Build two candidate token sets based on $Pr[\xgi]$: $S$, $O$.
 \STATE   Apply exponential mechanism to choose token set: $T$.
 \STATE  Randomly sample new $i$-th token from $T$: $\xgdp[i]$.
\ENDFOR
  \STATE \textbf{Output}: Differentially Private Text: $\xgdp$.
\end{algorithmic}
\end{algorithm}

Our ER-AE model starts with a basic sequence-to-sequence (seq2seq) auto-encoder structure. 
Given a text $x$, its tokens $\langle  x_{1} \dots x_{l}  \rangle$ are firstly converted into a sequence of embedding vectors  $\langle  Em(x_{1}) \dots Em(x_{l}) \rangle$ by $Em: \mathcal{V} \rightarrow \mathbb{R}^{m_{1}}$, where $\mathcal{V}$ is the vocabulary across the dataset and $m_1$ is the embedding dimension. On its top, we apply a bi-directional recurrent neural network with \emph{Gated Recurrent Unit (GRU)}~\cite{cho2014properties} that leverages both the forward and backward information. GRU achieves a comparable performance to LSTM with less computational overhead~\cite{cho2014properties}. Then, the produced final state vectors from both directions, $\bm{s}_{f}$ and $\bm{s}_{b}$, are concatenated and linearly transformed to be a latent vector $E(x)$. $m$ is the hidden state dimension for the GRU function.
\begin{equation}
\label{equ:enc}\
E(x) = \bm{W}_{h} \times \concat(\bm{s}_{f}, \bm{s}_{b}),  
\end{equation}
$ \textrm{where}\ \bm{s}_{f}, \bm{s}_{b} \in \mathbb{R} ^ {m},  \ \bm{W}_{h} \in \mathbb{R} ^ {h \times 2m}.$\\

The generator is another recurrent neural network with GRU. It generates a text token-by-token. For each timestamp $i$, it calculates a logit weight $z_{iv}$ for every candidate token $v \in \mathcal{V}$, conditioned on the latent vector, last original token $x_{i-1}$, and the last hidden state $\bm{s}_{i-1}$ of the GRU function.
\begin{equation*}
 \begin{split}
 z_{iv} & = \bm{w}^\top_v \gru(E(x), Em(x_{i-1}), \bm{s}_{i-1}) + b_v 
 \end{split}
\end{equation*}
Let $\xgi$ denote the random variable for the generated token at timestamp $i$. Its probability mass function is proportional to each candidate token's weight $z_{ti}$. This is modeled through a typical softmax function:
\begin{align}
\label{equ:sam1}
 Pr[\xgi = v] &=  \exp \left( z_{iv} \right) / \sum_{\vp \in \mathcal{V}} \exp \left( z_{i\vp} \right)  
\end{align}
For each timestamp $i$, a typical seq2seq model generates text by applying
$\argmax_{v \in \mathcal{V}}Pr[\xgi=v]$. However, this process does not protect the privacy of the original data.
\subsection{Differentially Privacy Text Sampling with Two-Set Exponential Mechanism}
To protect an individual's privacy and hide the authorship of the original input text, we couple differential privacy mechanism with the above sampling process in the generator. 
The \emph{exponential mechanism}~\cite{mcsherry2007mechanism} can be applied to both numeric and categorical data~\cite{fernandes-2018}. It is effective in various sampling process for discrete data. It guarantees privacy protection by injecting noise into the sampling process:

\theoremstyle{definition}
\begin{definition}{\textbf{Exponential Mechanism.}}
\label{def:expm}
Let $\mathcal{M}$ and $\mathcal{N}$ be two enumerable sets. Given a privacy budget $\epsilon$ $>$ 0, a rating function $\rho$: $\mathcal{M} \times \mathcal{N} \rightarrow \mathbb{R}$. The probability density function of the random variable $\varepsilon_{\epsilon, \rho}(m)$, $Pr\left[\varepsilon_{\epsilon, \rho}(m)=n\right]$ is:
\begin{equation}
\label{equ:expm}
\frac{\exp \left(\frac{\epsilon}{2 \Delta} \rho(m, n)\right)}  {\sum_{n^{\prime}} \exp \left(\frac{\epsilon}{2 \Delta} \rho\left(m, n^{\prime}\right)\right)}
\end{equation}
where $\Delta$, the sensitivity, means the maximum difference of rating function values between two adjacent datasets, and $m \in \mathcal{M}$, $n \in \mathcal{N}$.$\QEDA$
\end{definition}

However, according to Weggenmann and Kerschbaum~\shortcite{weggenmann2018syntf}, the exponential mechanism requires a large privacy budget to produce meaningful results while the output space is large, the vocabulary size in our case. It's nontrivial to randomly sample a good result directly among 20,000 candidates. 
To tackle the large output space issue, inspired by \textit{subsampled exponential mechanism}~\cite{lantz2015subsampled}, we propose a \textit{two-set exponential mechanism} to produce meaningful results with a better privacy protection. Instead of using a database independent distribution, we use a model-based distribution to generate subsets of tokens.

\theoremstyle{definition}
\begin{definition}{\textbf{Two-Set Exponential Mechanism.}}
\label{def:two_set}
Let $\mathcal{V}$ be a enumerable set with size $s$. Given the model-based probabilities of each item in $\mathcal{V}$, $Pr[v]$ for $v \in \mathcal{V}$, an item set $S$ of size $k$ is built by repeatedly sampling proportional to $Pr[v]$ with replacement. Other items are denoted as set $O$, where $\mathcal{V} = O \cup S$, $\emptyset = O \cap S$. 
Let $\mathcal{N} = \{S, O\}$. An item set $C_{dp}$ is chosen from $\mathcal{N}$ through the exponential mechanism with a rating function $\rho$: $\sum_{v \in C} Pr[v] / \sum_{C' \in \mathcal{N}, v' \in C'} Pr[v']$. Given $\epsilon > 0$, $N \in \mathcal{N}$, the probability density function (PDF) of the random variable $\varepsilon_{\epsilon, \rho}(C)$, $Pr\left[\varepsilon_{\epsilon, \rho}(C) = N \right]$, is:
\begin{equation}
\label{equ:twosetexpm2}
\frac{\exp \left(\frac{\epsilon}{2 \Delta} \rho(C, N)\right)} { \sum_{N^{\prime} \in \mathcal{N}} \exp \left(\frac{\epsilon}{2 \Delta} \rho\left(C, N^{\prime}\right)\right)}
\end{equation}
After choosing the set, $C_{dp}$, an item is randomly picked from the chosen set: $v \sim Random(C_{dp})$. Thus, given $v, w \in \mathcal{V}$,  $Pr[\varepsilon_{\epsilon, \rho}(v) = w]$ is:
\begin{equation*}
\label{equ:twosetexpm}
\begin{split}
 &Pr[w S] * Pr[\varepsilon_{\epsilon, \rho}(C)=S|w S] * Pr[w |w S, S]+
    \\ &Pr[w O] * Pr[\varepsilon_{\epsilon, \rho}(C)=O|w O] * Pr[w |w O, O],
\end{split}
\end{equation*}
where $Pr[wS]$ and $Pr[wO]$ are respectively the probability of $w$ in set $S$ and $O$, $Pr[wO] = (1-Pr[w])^{k}$. $\QEDA$
\end{definition}
\begin{theorem} \textbf{Two-Set Exponential Mechanism.\footnote{The proof is provided in Appendix~\ref{apdx:two_sum_proof}}}
\label{theom:two-set}
Given a privacy budget $\epsilon > 0$ and the size of output space $s$, two-set exponential mechanism is $(\epsilon + \ln{(s)})$-differentially private.$\QEDA$
\end{theorem}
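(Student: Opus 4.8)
The plan is to unfold the $(\epsilon+\ln s)$-differential-privacy claim into the multiplicative form of Definition~\ref{def:def}: since $e^{\epsilon+\ln s}=s\,e^{\epsilon}$, it suffices to show that for every output token $w\in\mathcal{V}$ and every pair of adjacent inputs — which, under the strict text-adjacency of Definition~\ref{def:dptg}, induce two model distributions $Pr[\cdot]$ and $Pr'[\cdot]$ over $\mathcal{V}$ — the ratio of the probabilities of emitting $w$ is at most $s\,e^{\epsilon}$. I would work directly from the closed form of $Pr[\varepsilon_{\epsilon,\rho}(v)=w]$ given in Definition~\ref{def:two_set}, which expresses this probability as the sum of an $S$-path term and an $O$-path term, each a product of a set-membership factor, an exponential-mechanism set-selection factor, and a uniform within-set pick factor.

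First I would isolate the exponential-mechanism factor. Conditioning on a realized partition $(S,O)$, the set-selection step is exactly the exponential mechanism of Definition~\ref{def:expm} over the two-point output space $\mathcal{N}=\{S,O\}$ with rating $\rho$ and sensitivity $\Delta$. The preliminary step here is to pin down $\Delta$: because $\rho(C,N)$ is a normalized probability mass lying in $[0,1]$ and the conservative adjacency permits the entire distribution to change, one gets $\Delta\le 1$, whence the $\tfrac{\epsilon}{2\Delta}$ scaling in Equation~(\ref{equ:twosetexpm2}) — with the shift appearing in both numerator and denominator — yields a set-selection ratio bounded by $e^{\epsilon}$, i.e. the ordinary $\epsilon$-DP guarantee of the exponential mechanism specialized to two outcomes.

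Next I would account for the remaining factor of $s$, which comes from set membership composed with the uniform pick. The uniform pick contributes $1/|C_{dp}|$ and is itself independent of the input, but $|C_{dp}|$ ranges up to $s$, and the membership probabilities $Pr[w\in S]=1-(1-Pr[w])^{k}$ and $Pr[w\in O]=(1-Pr[w])^{k}$ do depend on the input. The extremal case is instructive: when $w$ is nearly certain under one input it is drawn from a small $S$ and emitted with probability near $1$, whereas when $w$ is nearly impossible under the adjacent input it is instead drawn from an $O$ of size close to $s$ and emitted with probability near $1/s$. Capping this concentration-versus-dilution effect gives a factor of at most $s$, and multiplying it against the $e^{\epsilon}$ from set selection produces the target bound $s\,e^{\epsilon}=e^{\epsilon+\ln s}$, which is precisely Definition~\ref{def:def} with budget $\epsilon+\ln s$.

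The main obstacle is the coupling between the random partition and the two input distributions. Because $S$ is sampled \emph{according to} $Pr$ (respectively $Pr'$), the membership factor, the conditioning event ``$w\in S$'', and the exponential-mechanism step are not independent, so a path-by-path ratio fails: the $S$-path probability can vanish under the input for which $w$ is unlikely, giving a spurious unbounded ratio, while that same input routes almost all of $w$'s mass through the $O$-path. The delicate work, therefore, is to bound the ratio of the \emph{summed} two-path probabilities rather than each path separately, showing that whichever path dominates under each input is matched by the corresponding path under the other so that, after marginalizing over partitions, the worst-case overall ratio remains capped at $s\,e^{\epsilon}$. Confirming the sensitivity value $\Delta$ under the strict adjacency of Definition~\ref{def:dptg} is the secondary point that must be handled with care.
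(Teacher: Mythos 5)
Your proposal is correct and follows essentially the same route as the paper's proof in Appendix~\ref{apdx:two_sum_proof}: decompose $Pr[\varepsilon_{\epsilon,\rho}(\tilde{x}_i)=tk]$ into the $S$-path and $O$-path terms, extract a factor $e^{\epsilon}$ from the two-outcome exponential-mechanism selection and a factor $s$ from the uniform within-set pick, and crucially bound the ratio of the \emph{summed} two-path probabilities (using that the membership probabilities for $S$ and $O$ are complementary and sum to one) rather than each path separately. Your observation that a path-by-path ratio can blow up is exactly the reason the paper's argument is organized this way, so no substantive difference or gap remains.
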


By plugging our model with this mechanism,
we have the probability mass function for $\xgdpi$: $Pr[\xgdpi=tk]$.
This function models the disturbed probability distribution for all the alternative token $tk$ to replace the original variable. According to Theorem~\ref{theom:two-set}, sampling from $\xgdpi$ for each timestamp $i$ is $(\epsilon + \ln{(s)})$-differentially private.
Recall that in Definition~\ref{def:def}, the timestamp is bound by $l$. To generate text $\xgdp$, the generator samples a token for timestamp $i$ through the chosen set $T_{i}$:
\begin{equation}
\label{equ:samfull}
    \xgdp[i] \sim Random(T_{i}) \;\;\;\; \text{for}\;i\in [1,l]
\end{equation}

The \emph{composition theorem}~\cite{dwork2014algorithmic} is an extension to differential privacy. By repeating $n$ $\epsilon$-differentially private algorithms, the complete process achieves an $\epsilon n$-differential privacy. Algorithm~\ref{alg:er-ae} shows the differentially private text generation of ER-AE. As proved in Appendix~\ref{apdx:dp_proof}:
\begin{theorem} \textbf{Differentially Private Text Sampling.}
\label{the:dpts}
Given a privacy budget $\epsilon$ $>$ 0, a sequence length $l$ $>$ 0, the generator's sampling function in Eq.~\ref{equ:samfull} is $(\epsilon + \ln{(s)}) * l$-differentially private. $\QEDA$
\end{theorem}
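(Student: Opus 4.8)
The plan is to reduce this to the single-token guarantee already established in Theorem~\ref{theom:two-set} and then aggregate over the $l$ timestamps using the composition theorem. First I would fix an arbitrary timestamp $i \in [1,l]$ and view the operation of Eq.~\ref{equ:samfull} at that step — building the two candidate sets $S$ and $O$ from the model-based distribution $Pr[\xgi]$, selecting a set $T_i$ through the exponential mechanism, and then sampling a token uniformly from $T_i$ — as a single randomized mechanism acting on the input text. With output-space size $s = |\mathcal{V}|$, Theorem~\ref{theom:two-set} already certifies that this per-token mechanism is $(\epsilon + \ln(s))$-differentially private. This supplies the per-step guarantee that the rest of the argument composes.

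Next I would observe that generating the complete output $\xgdp = \langle \xgdp[1], \dots, \xgdp[l] \rangle$ is exactly the sequential execution of $l$ such per-token mechanisms, as written in Eq.~\ref{equ:samfull}. Invoking the composition theorem~\cite{dwork2014algorithmic} — running $n$ mechanisms each of which is $\epsilon'$-differentially private yields an $n\epsilon'$-differentially private procedure — with $n = l$ and $\epsilon' = \epsilon + \ln(s)$ immediately gives the claimed bound $(\epsilon + \ln(s)) \cdot l$.

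The hard part will be that the $l$ samplings are not independent: the logit weights $z_{iv}$, and hence the candidate sets and their ratings at step $i$, depend on the previously emitted token and on the running GRU hidden state $\bm{s}_{i-1}$. The argument therefore cannot rest on naive parallel composition but must appeal to the sequential (adaptive) form of the composition theorem, in which each mechanism may be chosen as a function of the outputs of all earlier ones. I would accordingly justify that the per-step bound of Theorem~\ref{theom:two-set} holds uniformly — that for every realization of the conditioning history, and every resulting model-based distribution, the sensitivity $\Delta$ and the output size $s$ are unchanged — so that adaptive composition applies with the same $\epsilon + \ln(s)$ at each of the $l$ steps. Given the strictest adjacency notion fixed in Definition~\ref{def:dptg} (two single-record datasets differing in one edited record), verifying this uniformity of the per-step guarantee across the $l$ adaptively-composed steps is the only nontrivial point; once it is in hand, the final multiplication by $l$ is routine.
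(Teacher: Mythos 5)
Your proposal is correct and follows essentially the same route as the paper: certify each per-timestamp draw as $(\epsilon + \ln(s))$-differentially private via Theorem~\ref{theom:two-set}, then multiply by $l$ using the sequential composition theorem. The one place where you and the paper diverge is in how the composition step is justified. You treat the dependence of step $i$ on the decoding history as the crux and invoke adaptive composition, checking that the per-step guarantee holds uniformly over every realization of the history. The paper instead observes that this dependence does not arise at all: at generation time the GRU is conditioned on the latent vector, the \emph{original} token $x_{i-1}$, and the hidden state (its proof explicitly notes that ``every input of the generator is the original input data $x_{i-1}$''), so the $l$ mechanisms are fixed functions of the private input rather than of the previously emitted tokens, and plain sequential composition suffices. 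Your reading --- that the candidate sets at step $i$ depend on the previously \emph{emitted} token --- is therefore not what the model actually does, but your more conservative argument subsumes the paper's: for pure $\epsilon$-differential privacy the composition theorem holds adaptively, so your bound goes through either way, and it would even remain valid if the generator were switched to feeding back its own samples. The only caution is that your uniformity check must confirm the sensitivity $\Delta$ of the rating function is bounded independently of the history; the paper's proof of Theorem~\ref{theom:two-set} handles this once and for all, so nothing further is needed.
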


\subsection{Initial Grammar and Semantic Preservation}
To generate a human-friendly text that has a close semantic to the original one, we need to have a high-quality rating function $\rho_i$ for Eq.~\ref{equ:twosetexpm2}. 
This is achieved by training the ER-AE model's encoder to extract semantic information, and its generator to learn the relationships among the tokens for prediction. We follow an unsupervised learning approach since we do not assume any label information. First, we adopt the reconstruction loss function:
\begin{equation}
\label{equ:recon}
\mathcal{L}_{recon}=\sum_{x_i \in x, x \in \mathbb{D}}-\log Pr\left[\xgi = x_i \right]
\end{equation}
It maximizes the probability of observing the original token $x_i$ itself for the random variable $\xgi$. In the recent controllable text generation models, the reconstruction loss plays an important role to preserve grammar structure and semantics of input data~\cite{shetty20184} when combined with the other loss.

\subsection{REINFORCE Training for Semantic Augmentation}
Diving into the optimization aspect of the softmax function, the reconstruction loss function above encourages the model to produce a higher probability on the original token while ignoring the rest candidates. It does not consider the other tokens that may have a similar meaning under a given context. This issue significantly limits the variety of usable alternative tokens. Additionally, this loss function relies on a single softmax function for multi-object learning, it cannot provide the expressiveness required by the language model~\cite{yang2017breaking}. 
We inspect the candidates and in most of the cases, only the top-ranked token fits the context in the text. This is problematic because the mechanism for our sampling process also relies on the other candidates to generate text.
To address the above issue, we propose a novel embedding reward function using the pre-trained word embeddings. 
Word representation learning models
show that discrete text tokens' semantic can be embedded into a continuous latent vector space. The distance between word embedding vectors can be a reference to measure the similarity between different words. To encourage our rating function $\rho_i$ to learn richer and better substitute tokens, we propose a reward function that leverages the semantics learned from the other corpus. The text dataset to be anonymized and released can be small, and the extra semantic knowledge learned from the other corpus can provide additional reference for our rating function.
This reward function is inspired by the Policy Gradient loss function~\cite{sutton2000policy}, $\mathcal{L}_{embed}$ is:
\begin{equation*}
\begin{split}
-\sum_{x_i \in x, x \in \mathbb{D}} & \Big( \sum_{v \in  \mathbb{E}_k(\xgi)} {\log(Pr[\xgi = v])\gamma(x_{i}, v) } \\
&+ \sum_{w \sim \mathbb{V}_{k}}{\log(Pr[\xgi = w])\gamma (x_{i}, w)}\Big)
\end{split}
\end{equation*}
Generally, this reward function assigns credits to the under-rated tokens in the reconstruction loss function. Recall that $\mathbb{D}$ is the original dataset and $x$ is one of its texts. At time step $i$, this reward function first assigns rewards to the top-$k$ selected tokens, denoted as $\mathbb{E}_k(\xgi)$, according to probability estimates for random variable $\xgi$ in Eq.~\ref{equ:sam1}.
The rewards are proportional to their semantic relationship to the original token $x_i$. It is defined as a function $\gamma: \mathcal{V} \times \mathcal{V} \rightarrow \mathbb{R}$, $\gamma (w, v)$ is:
\begin{equation}
\min\big(\cosine (Em(w), Em(v)), 0.85\big)
\end{equation}
The $min$ function avoids the generator focusing only on the original token. By assigning rewards to $\mathbb{E}_k(\xgi)$, it encourages the other candidates having a close semantic to the targeted one, but it may fail to reach infrequent tokens. Therefore, in the second part of the reward function, we encourage the model to explore less frequent tokens by random sampling candidates as $\mathbb{V}_{k}$.
This design balances the exploitation (top-$k$) and the exploration ($\mathbb{V}_{k}$) in reinforcement learning.

During training, the model is firstly pre-trained by minimizing the reconstruction loss  in Eq.~\ref{equ:recon} through the \emph{Adam} optimizer, and adopts the embedding reward loss later. The total loss is
\begin{equation}
\label{total_loss}
\mathcal{L} = \lambda_{recon} \times \mathcal{L}_{recon} + \lambda_{embed} \times \mathcal{L}_{embed}
\end{equation}
Specifically, the reconstruction loss can lead the model to generate grammatically correct text, and the embedding reward loss encourages the model to focus more on semantically similar tokens. The balance of the two loss functions are controlled by $\lambda_{recon}$ and $\lambda_{embed}$.

\begin{table*} [h]
\small
\caption{Results for each evaluation metric on both datasets. $\uparrow$ indicates the higher the better. $\downarrow$ indicates the lower the better.}
\label{result_table}
\centering
\begin{tabular}{l|ccc|cc}
\toprule
\multicolumn{1}{c}{ }  & \multicolumn{3}{c|}{Yelp (100-author)}    & \multicolumn{2}{c}{Conferences' Dataset}                              \\
\toprule
Model                & \multicolumn{1}{c}{USE $\uparrow$} & \multicolumn{1}{c}{Authorship $\downarrow$} & \multicolumn{1}{c|}{Stylometric$\uparrow$} & \multicolumn{1}{c}{USE$\uparrow$} & \multicolumn{1}{c}{Stylometric$\uparrow$}  \\
Original text    & 1     & 0.5513  & 0 & 1            & 0           \\

Random-R & 0.1183 & 0.0188  & 62.99  & 0.1356 & 65.624  \\
AE-DP & 0.6163 & 0.097 & 11.443 & 0.614 & 9.859 \\
SynTF \shortcite{weggenmann2018syntf}             & 0.1955 & 0.0518 & 26.3031  & 0.2161 & 25.95    \\
ER-AE (ours)    & 0.7548 & 0.0979 & 13.01  & 0.7424 & 9.838
\\ \toprule

\end{tabular}
\end{table*}

\begin{table}[!ht]
\begin{minipage}{.9\linewidth}
\centering
\small
\caption{The intermediate result of top five words and their probabilities at that the third and the forth generation steps.}
\label{ir}
\begin{tabular}{ll}
    \toprule
      \multicolumn{2}{c}{\textbf{Input:} there are \textbf{several unique} hot dog entrees to choose.}  \\
      \toprule
      & \multicolumn{1}{c}{\textbf{several}}                                              \\
AE-DP    & \textbf{several 0.98}, those 0.007, some 0.003                        \\
      & various 0.002, another 0.001                                      \\
ER-AE & many 0.55, some 0.20, \textbf{several 0.14}                               \\
      & different 0.04, numerous 0.03         \\  
\toprule 
      & \multicolumn{1}{c}{\textbf{unique}}                                              \\
AE-DP    & \textbf{unique 0.99}, different 0.0001, new 3.1e-05,                     \\
       & nice 2.5e-05, other 2.1e-05                                      \\
ER-AE & \textbf{unique 0.37}, great 0.21, amazing 0.15,       \\
     & wonderful 0.1, delicious 0.05        \\  
\toprule                    
\end{tabular}
\end{minipage}\hfill
\begin{minipage}{.9\linewidth}
\centering
\small
\caption{The estimated probability of a good candidate sampled with different mechanisms.}
\label{tl_test}
\begin{tabular}{lll}
    \toprule
       \multicolumn{3}{c}{\textbf{Input:} there are \textbf{several unique} hot dog entrees to choose.}  \\
       \toprule  
       & \multicolumn{1}{c}{\textbf{several}}            & \multicolumn{1}{c}{\textbf{unique}}    \\
Exponential Mechanism    & \multicolumn{1}{c}{0.0017} &   \multicolumn{1}{c}{0.00091}  \\ 
 Two-Set Exponential Mechanism      & \multicolumn{1}{c}{0.7411} &   \multicolumn{1}{c}{0.6794}   \\
 \toprule                    
\end{tabular}
\end{minipage}
\end{table}

\begin{table*}[h]
\small
\caption{Sample sentences generated by models.}
\label{sample}
\centering

\begin{tabular}{ll}

\toprule
\textbf{Input} & the play place is pretty fun for the little ones . \\
Random-R & routing longtime 1887 somalia pretty anatomical shallow the dedicated drawer rosalie \\
AE-DP & employer play lancaster mute fish fun for wallace little chandler . \\
SynTF & conditioned unique catherine marquis governing skinny garment hu vivid . insists \\
ER-AE & the play place is pretty nice with the little ones ! \\

\toprule
\textbf{Input} & i also ordered a tamarind margarita and it was great . \\
Random-R & substantial char recommended excavation tamarind coil longitudinal recover verify great housed \\
AE-DP & intersection also ordered service tamarind drooling scratched denis monkfish motions .  \\
SynTF & carnage spence unsigned also clinging said originated beacon liking strike accomplishments \\
ER-AE & i also requested a tamarind margarita and it were great . \\
\toprule
\textbf{Input} & i 'm not complaining because you do get exactly what you pay for . \\
Random-R & substantial char recommended excavation tamarind coil longitudinal recover verify great housed \\
AE-DP & comic-book 'm not mins because you donnelly get exactly tenderloin nerves bottomless for aldo box  \\
SynTF & leaf penetrated amounted jolted courageous socket fades unwilling tu judges regional numbering \\
ER-AE & i 'm not disappointing because you do make occult what you pay for . \\

\toprule
\textbf{Input} & the manuscript is well written is provides good insight into the problem . \\
AE-DP & the fig2c is well l102-103 wish provides horseshoe insight into the problem compositionality   \\
SynTF & ness voice incoming depending entrances somehow priscilla rows romantic oblivious mall \\
ER-AE & the manuscript is well edited has provides excellent insight into the problem .  \\

\toprule
\textbf{Input} & in particular , the generality of the approach is very well presented . \\
SynTF & wife pierced rotate specialist probe elects prussian beatty eccentric sweating . \\
ER-AE & in particular , this generality of an approach is very well written well \\

\toprule
\end{tabular}
\end{table*}

\begin{figure*}[h]
\centering
\includegraphics[width=0.8 \textwidth, height= 0.25\textwidth]{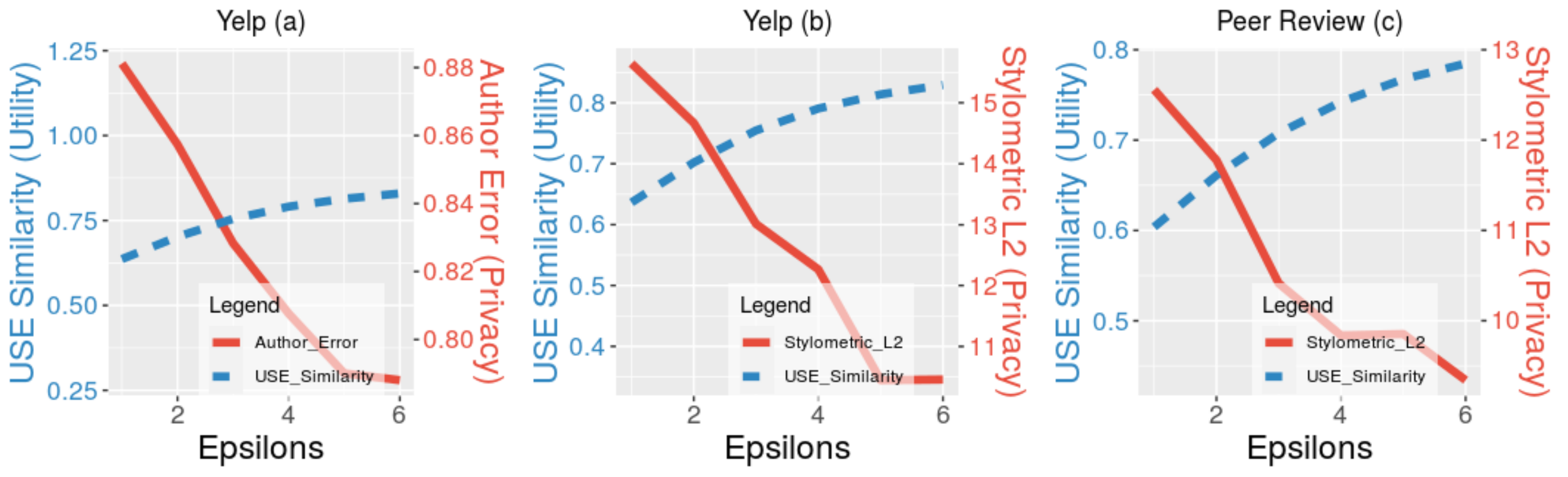}
\caption{Privacy v.s. Utility. Comparing USE similarity (utility), authorship identification error rate (privacy) and Stylometrics L2 distance (privacy) for different $\epsilon$s on applicable datasets.}
\label{fig:tradeoff}
\end{figure*}

\section{Experiment}
\label{others}

All the experiments are carried out on a Windows Server equipped with two Xeon E5-2697 CPUs (36 cores), 384 GB of RAM, and four NVIDIA TITAN XP GPU cards. We evaluate ER-AE on two different datasets with respect to its effectiveness for privacy protection and utility preservation.

\squishlist


\item
\textbf{Yelp Review Dataset}\footnote{http://www.yelp.com/dataset$\_$challenge}:
All the reviews and tips from the top 100 reviewers ranked by the number of published reviews and tips. It contains 76,241 reviews and 200,940 sentences from 100 authors.

\item
\textbf{Academic Review Dataset}:
All the public reviews from NeurIPS (2013-2018) and ICLR (2017) based on the original data and the web crawler provided by \cite{kang18naacl}. It has 17,719 reviews, 268,253 sentences, and the authorship of reviews is unknown. 

\squishlistend

Each dataset is divided into 70/10/20 for train/dev/evaluation respectively.
As mentioned in the related work discussion, most of the controllable text generation and style transferal studies rely on known authorship or other labels. Other generation models such as paraphrasing, however, hold an essentially different goal and cannot provide a privacy guarantee on the generated data. They are not applicable to our problem. Therefore, we pick SynTF~\cite{weggenmann2018syntf} and different generation and sampling models for evaluation:
\squishlist

\item \textbf{Random Replacement (Random-R)}: This method generates a new text by replacing each token in the text by randomly picking substitution from the vocabulary. 
\item \textbf{AE with Differential Privacy (AE-DP)}: Extended version of AE with the added two-set exponential mechanism for text generation. It does not include the embedding reward.
\item \textbf{SynTF~\cite{weggenmann2018syntf}}: We directly generate the tokens through SynTF's differentially private sampling function, without further extraction of the frequency vector.

\squishlistend
SynTF is a state-of-the-art generation model that satisfies differential privacy property on textual data. The other two simple baselines are for ablation test purposes. 

For ER-AE, we adopted a two-layers stacked GRU network for both the encoder and the generator. There are 512 cells in each GRU layer. The vocabulary size is 20,000, separately built for each dataset. All the word embeddings in our model come from the pre-trained \emph{BERT} embeddings provided by \cite{devlin2018bert}, which has a dimension of 768 for each embedding. The maximum input length of our model is 50, the learning rate is 0.001, the $k$ for embedding reward loss function is 5, the $\lambda_{recon}$ is 1, the $\lambda_{embed}$ is 0.5, and the batch size is 128. The $k$ in two-set exponential mechanism is 5. ER-AE is implemented in TensorFlow~\cite{abadi2016tensorflow}, and it uses the tokenizer in the NLTK library. Some traditional tricks for text generation, such as beam search, are not mentioned because they are incompatible with differential privacy. All the models are evaluated from three aspects: semantic preservation, privacy protection, and stylometric changes:
\squishlist
\item \textbf{Semantic Preservation (USE)}: A pre-trained \textit{Universal Sentence Embedding} (\textit{USE}) model\footnote{https://tfhub.dev/google/universal-sentence-encoder/1} from Google. It can embed a sentence into a latent vector that represents its semantics.
It is widely used for supervised NLP tasks such as sentiment analysis.
We measure the degree of semantic preservation using the cosine similarity between the latent vector of the original text and one of the generated text. 

\item \textbf{Privacy Protection (Authorship)}: A state-of-the-art authorship identification neural network model~\cite{sari2017continuous} to identify the authorship of generated text. The model is firstly trained on the training dataset, and the performance is evaluated on the testing set. The author's privacy is protected if s/he cannot be identified using authorship identification techniques.

\item \textbf{Stylometric Changes}: Well-established stylistic context-free features such as text length and a number of function words. We adopt \emph{StyloMatrix}~\cite{ding2017learning} for an aggregation of  features in 
\cite{iqbal2013unified,zheng2006framework}.
The feature vector change
is measured by the difference in L2 norm.
\squishlistend

\textbf{Quantitative Evaluation (Table~\ref{result_table}).} With a low utility (USE) score around 0.2 for both datasets, SynTF, and Random-R generate grammatically incorrect text and completely change the meaning of the original one. In contrast, ER-AE without semantic augmentation through REINFORCE training, denoted as AE-DP, achieves a much higher utility score of around 0.61. The full model ER-AE, with an $\epsilon$ of 3, achieves the highest utility score of 0.75 for Yelp reviews and 0.74 for peer reviews.
AE-DP, SynTF, and ER-AE all significantly reduce the chance of a successful authorship identification attack from 55\% to lower than 10\% in the Yelp data and introduce a variation in stylometric features of more than 10 in magnitude in the peer review dataset. They are all effective and competitive on removing the personal writing trait from the text data, but as mentioned above, AE-DP achieves the best and a much higher utility score. Although Random-R performs better on privacy protection, its generated texts are irrelevant to the original. 
Overall, with a competitive performance on anonymization, ER-AE performs significantly better than all of the other models on utility.

\textbf{Impact of Embedding Reward.} 
Table~\ref{sample} shows that the embedding reward plays an important role in selecting semantically similar candidates for substitution. 
AE-DP assigns a large probability to the original token and a tiny probability to the others. If applied with the mechanism, it is more likely to pick a semantically irrelevant token. 
ER-AE shows a smoother distribution and assigns higher probabilities to top-ranked semantically relevant tokens. Its generated candidates are better.

\textbf{Case Study.}
Table \ref{sample} shows that both SynTF and Random-R cannot generate human-friendly text.
Due to the issue of reconstruction loss function [\ref{equ:recon}],
AE-DP cannot substitute token with similarly semantic tokens and destroys the semantic meaning. ER-AE, powered by embedding reward, can substitute some tokens with semantically similar ones: ``written" is replaced by ``editted", and the whole sentence still makes sense. Besides, it can preserve the grammatical structure of the input. However, due to some missing information from word embeddings, the model would fail to generate good candidates for sampling.
The third sample replaces ``exactly" with `` occult".
ER-AE still performs way better than other models. 

\textbf{Utility vs. Privacy.}
The privacy budget $\epsilon$ controls the trade-off between privacy and utility. A larger $\epsilon$ implies better utility but less protection on privacy. However, this is a relative value that implies different degrees of risk given different problems~\cite{weggenmann2018syntf,fernandes-2018}.
As proved by Weggenmann and Kerschbaum~\shortcite{weggenmann2018syntf} a higher $\epsilon$ is intrinsically necessary for a large output space, in our case the vocabulary, to generate relevant text. In fact, we have already significantly reduced the optimal $\epsilon$ value of 42.5 used by Weggenmann and Kerschbaum~\shortcite{weggenmann2018syntf} to around 13, given the same dataset.
One possible way to lower the bound of $\epsilon$ is to directly factor in authorship and utility, such as topics, into the privacy model. However, it limits applicable to datasets.

\textbf{Exponential Mechanism vs. Two-Set Exponential Mechanism.}
In Table~\ref{tl_test}, we estimated the probability of a meaningful token (among top 5 semantically similar tokens) is sampled based on the intermediate probabilities in Table~\ref{ir}. Given a large output space of 20,000, the exponential mechanism is not likely to sample a meaningful token with a probability of 0.01 \%. However, the two-set exponential mechanism dramatically improves it from 0.01 \% to around 70\%. Our generator has a much higher chance to generate meaningful results with a similar privacy budget.

\section{Conclusion}
In this paper, we propose a novel model, ER-AE, to protect an individual's privacy for text data release. We are among the first to fuse the differential privacy mechanisms into the sequence generation process. We demonstrate the effectiveness of our model on the Yelp review dataset and two peer reviews datasets. 
However, we also find that ER-AE performs not very well on long texts due to the privacy budget accounting issue. Our future research will focus on improving long texts generation with better budget allocation scheme.

\section{Ethical Considerations}
Our model outperforms others on authorship obscuration and semantic preservation. Similar to other text generation tools, this model may be abused to generate fake reviews, but this can be assuaged by using fake review detection methods. This research work directly contributes to the area of privacy protection and indirectly promotes freedom of speech and freedom of expression in cyberspace.

\section*{Acknowledgments}
This research is supported in part by Natural Sciences and Engineering Research Council of Canada (NSERC) Discovery Grants (RGPIN-2018-03872), Canada Research Chairs Program (950-232791), and Provost Research Fellowship Award (R20093) and Research Incentive Funds (R18055) from Zayed University, United Arab Emirates. The Titan Xp used for this research was donated by the NVIDIA Corporation.







\bibliography{anthology,aaai}
\bibliographystyle{acl_natbib}

\appendix
\section{Proof of Differentially Private Text Sampling.}
\label{apdx:dp_proof}
\begin{theorem} \textbf{Differentially Private Text Sampling.}
Given a privacy budget $\epsilon$ $>$ 0, a sequence length $l$ $>$ 0, the generator's sampling function in Eq.7 is $(\epsilon + \ln{(s)}) * l$-differentially private. $\QEDA$
\end{theorem}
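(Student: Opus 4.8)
The plan is to reduce the full-sequence guarantee to the single-step guarantee already established in Theorem~\ref{theom:two-set} and then close the gap with the sequential composition theorem. First I would observe that the generation of $\xgdp$ in Eq.~\ref{equ:samfull} is nothing more than the $l$-fold repetition of one primitive: drawing $\xgdp[i] \sim Random(T_i)$, where $T_i$ is the set chosen by the two-set exponential mechanism over the vocabulary $\mathcal{V}$ of size $s$. Each such draw is exactly the mechanism analysed in Theorem~\ref{theom:two-set}, which certifies that a single token sampling is $(\epsilon + \ln(s))$-differentially private with respect to the adjacency relation of Definition~\ref{def:def} on the input text. Crucially, I would emphasise that this per-step bound is worst-case over the model-based probabilities $Pr[v]$ feeding the mechanism, so it holds uniformly for every timestamp $i$ regardless of the latent vector $E(x)$ and the conditioning context that produced those probabilities.

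Second, I would cast the whole procedure as the composition $\mathcal{M}_1, \dots, \mathcal{M}_l$ of these $l$ per-timestamp mechanisms, each acting on the same input text and each $(\epsilon+\ln(s))$-differentially private. Invoking the composition theorem~\cite{dwork2014algorithmic}, which states that chaining $l$ mechanisms of budget $\epsilon'$ yields budget $\epsilon' l$, immediately gives the claimed $(\epsilon + \ln(s)) \cdot l$-differential privacy for the concatenated output $\xgdp = \langle \xgdp[1], \dots, \xgdp[l] \rangle$. I would also note the consistency with Definition~\ref{def:dptg}: identifying the per-step budget as $\epsilon + \ln(s)$ recovers exactly the $\epsilon l$-style accounting promised there, with the $\ln(s)$ term being the unavoidable price of the subsampling step inside the two-set mechanism.

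The delicate point, and the step I would treat most carefully, is justifying that the composition theorem genuinely applies here rather than merely quoting it. Because the generator is recurrent, the logits $z_{iv}$ at step $i$ are produced from $E(x)$ together with the conditioning $(x_{i-1}, \bm{s}_{i-1})$, and one must argue this correlation across steps does not inflate the budget beyond the naive sum. The clean resolution is that all of this conditioning is a deterministic post-processing of the fixed input $x$, so conditioned on $x$ the $l$ sampling steps are mutually independent and the non-adaptive form of sequential composition suffices directly; and even if one prefers the variant that feeds back sampled tokens, the adaptive composition theorem for pure differential privacy carries the identical additive bound. Establishing this independence-given-$x$ observation is the main obstacle, since everything else is a routine invocation of the two already-proved building blocks.
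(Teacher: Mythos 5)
Your proposal is correct and follows essentially the same route as the paper: invoke the per-timestamp $(\epsilon+\ln(s))$-differential privacy guarantee of the two-set exponential mechanism and then apply sequential composition over the $l$ timestamps. Your careful justification that the recurrent conditioning is deterministic post-processing of the fixed input $x$ (so composition applies at the naive additive rate) is exactly the point the paper compresses into its remark that every input of the generator is the original token $x_{i-1}$.
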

\begin{proof}
At the generation stage, for each timestamp $i$, our model generates a token by sampling from Eq.~7, which follows the form of exponential mechanism. This process achieves $(\epsilon + \ln{(s)})$-differential privacy as in Definition 4. Every input of the generator is the original input data $x_{i-1}$ (see Eq.2).
Eq.7 satisfies the sequential composition theorem. By repeating this process $l$ times, the complete sampling function provides $(\epsilon + \ln{(s)}) * l$-differential privacy. $\xgdp$ is $(\epsilon + \ln{(s)}) * l$-differentially private.  
\end{proof}

\section{Proof of Two-Set Exponential Mechanism. }
\label{apdx:two_sum_proof}

\begin{theorem} \textbf{Two-Set Exponential Mechanism.}
\label{theom:two-set}
Given a privacy budget $\epsilon > 0$ and the size of output space $s$, two-set exponential mechanism is $(\epsilon + \ln{(s)})$-differentially private. $\QEDA$
\end{theorem}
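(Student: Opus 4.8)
The plan is to read the two-set mechanism as the composition of two randomized stages---an exponential-mechanism choice of a set, followed by a uniform draw from the chosen set---and to charge $\epsilon$ to the first stage and $\ln(s)$ to the second. Concretely, for any two adjacent inputs (two distinct source tokens, which induce two distinct model distributions $\{Pr[v]\}_{v\in\mathcal{V}}$) and any target output $w$, I would bound the ratio of the output densities of Definition~\ref{def:two_set} by $e^{\epsilon}\cdot s = e^{\epsilon+\ln(s)}$, which is exactly the stated guarantee.

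First I would dispatch the set-selection stage: picking $N\in\{S,O\}$ through Eq.~\ref{equ:twosetexpm2} is a direct instance of the exponential mechanism of Definition~\ref{def:expm}, so its selection probabilities satisfy the $e^{\epsilon}$ ratio bound across adjacent inputs. Next I would analyze the terminal draw $v\sim Random(C_{dp})$: because this is uniform, the conditional $Pr[w\mid w\in C,C]$ equals $1/|C|$, and since a chosen set has size between $1$ and $s$, this factor contributes a multiplicative gap of at most $s$ between the two inputs. Writing the output probability as the two-term mixture of Definition~\ref{def:two_set}---a set-membership weight $Pr[wS]=1-(1-Pr[w])^{k}$ or $Pr[wO]=(1-Pr[w])^{k}$, times a selection probability, times the uniform factor---I would bound each selection factor by $e^{\epsilon}$ and each uniform factor by $s$, so the whole mixture is squeezed within $s\,e^{\epsilon}$ of the neighbouring mixture, yielding the claim after taking logarithms.

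The crux, and where I expect the real work to be, is handling support and coupling together. A naive ``uniform sampling is $\ln(s)$-differentially private'' argument fails, because an item selected under one input could be absent from the chosen set under its neighbour, forcing an infinite ratio; the two-set construction is precisely what rescues this, since $S\cup O=\mathcal{V}$ and the randomized draw of $S$ gives every $w$ strictly positive membership weight $Pr[wS],Pr[wO]>0$ under both inputs, guaranteeing full support. I would therefore need to verify that perturbing the whole distribution $\{Pr[v]\}$ shifts the membership weights, the exponential-mechanism scores $\rho(C,\cdot)$, and the set sizes $|C|$ in a way that still compounds to no more than $s\,e^{\epsilon}$---in particular, that the membership weights are either absorbed into the rating function's sensitivity $\Delta$ (and hence into the $e^{\epsilon}$ factor) or cancel in the ratio. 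Once full support and this combined bound are secured, the remaining bookkeeping---summing the two mixture terms and reading off the $(\epsilon+\ln(s))$ budget---is routine.
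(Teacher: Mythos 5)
Your proposal matches the paper's proof in its essentials: the appendix decomposes $Pr[\varepsilon_{\epsilon,\rho}(\tilde{x}_{i})=tk\mid x]$ into exactly the two-term mixture you describe (membership weight $\times$ exponential-mechanism set selection $\times$ uniform draw), bounds the selection factor by $e^{\epsilon}$ and the uniform-draw factor by $s$ (since $1/|C|\in[1/s,1]$ for any chosen set), and sums the two terms. The one point you leave open---how the input-dependent membership weights are controlled---is resolved by the second of your two guesses: because $Pr[tk\in S\mid x]+Pr[tk\in O\mid x]=1$ for every input, these weights cancel when the two mixture terms are summed, leaving the clean bound $e^{\epsilon}\cdot s=e^{\epsilon+\ln(s)}$.
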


\begin{proof}
Given tokens sets $S$ and $O$, $\mathcal{V} = O \cup S$, $\emptyset = O \cap S$, and $\mathcal{N} = \{S, O\}$.
Let the choice, $C$, on $S$ and $O$ be $\epsilon$-differentially private, the sampling of an item in the chosen set be totally random. With $Pr[tk,tk N, N|x] = Pr[tk N| x] *Pr[\varepsilon_{\epsilon, \rho}(C)=N|tk N,x] * Pr[tk | N,tk N, x]$, where $N \in \mathcal{N}$, $i \in [1, l]$, $tk \in \mathcal{V}$, for any $x' \sim x$:
\begin{equation*}
\begin{split}
    &\frac{Pr[\varepsilon_{\epsilon, \rho}(\Tilde{x}_{i}) = tk | x]}{Pr[\varepsilon_{\epsilon, \rho}(\Tilde{x}_{i}) = tk | x']} =
    \\ & \qquad \qquad \frac{Pr[tk,tk S, S|x]}{Pr[tk,tk S, S|x'] + Pr[tk,tk O, O|x']} \\
    & \qquad + \frac{Pr[tk,tk O, O|x]}{Pr[tk,tk S, S|x'] + Pr[tk,tk O, O|x']}.
\end{split}
\end{equation*}

For the first part, denoted as $P_{S}$, with $\mathcal{V}$ of size $s$, by dividing the numerator and denominator with $Pr[\varepsilon_{\epsilon, \rho}(C)=S|tk S,x] * Pr[tk | S,tk S, x]$, we can get:
\begin{equation*}
    \begin{split}
        P_{S} &=\frac{Pr[tk,tk S, S|x]}{Pr[tk,tk S, S|x'] + Pr[tk,tk O, O|x']}  \\
        &\geq \frac{Pr[tk S| x] }{ \exp{(\epsilon)}*s}
    \end{split}
\end{equation*}
since 
\begin{equation*}
    \begin{split}
        \frac{Pr[\varepsilon_{\epsilon, \rho}(C)=S|tk S,x']}{Pr[\varepsilon_{\epsilon \rho}(C)=S|tk S,x]} 
        & \leq \exp{(\epsilon)} \\
        \frac{Pr[\varepsilon_{\epsilon, \rho}(C)=O|tk O,x']}{Pr[\varepsilon_{\epsilon \rho}(C)=S|tk S,x]}
        &\leq \exp{(\epsilon)} \\
        Pr[tk | S, tk S, x'] / Pr[tk | S, tk S, x] &\leq s \\
        Pr[tk | O, tk O, x'] / Pr[tk | S, tk S, x] &\leq s.
    \end{split}
\end{equation*}

For the second part, denoted as $P_{O}$, similarly, we have $P_{O} \geq {P[tk O| x] }/({ \exp{(\epsilon)}*s}).$ Then, 
\begin{equation*}
\begin{split}
    \frac{Pr[\varepsilon_{\epsilon, \rho}(\Tilde{x}_{i}) = tk | x]}{Pr[\varepsilon_{\epsilon, \rho}(\Tilde{x}_{i}) = tk | x']} &= P_{S} +  P_{O} \\ \geq& \frac{Pr[tk S| x] + Pr[tk O| x] }{ \exp{(\epsilon)}*s}
\end{split}
\end{equation*}

Since $Pr[tk S| x] + Pr[tk O| x] = 1$, the equation can be written as: ${Pr[\varepsilon_{\epsilon, \rho}(\Tilde{x}_{i}) = tk | x']}/{Pr[\varepsilon_{\epsilon, \rho}(\Tilde{x}_{i}) = tk | x]} \leq \exp{(\epsilon)}*s = \exp{(\epsilon + \ln{(s)})}$.
Therefore, the two-set exponential mechanism satisfies ($\epsilon + \ln{(s)}$)-differential privacy.
\end{proof}

\end{document}